\newcommand{\algorithmstyle}[1]{\renewcommand{\algocf@style}{#1}}
\definecolor{Eggplant}{RGB}{97, 64, 81}
\definecolor{RoyalBlue}{cmyk}{1, 0.50, 0, 0}
\let\originalleft\left
\let\originalright\right
\renewcommand{\left}{\mathopen{}\mathclose\bgroup\originalleft}
\renewcommand{\right}{\aftergroup\egroup\originalright}
\newcommand{\GSO}{\mbox{\textsc{gso}\xspace}}
\newcommand{\LL}{\mbox{\textsc{l}$^2$}\xspace}
\newcommand{\LLL}{\mbox{\textsc{lll}}}
\newcommand{\TLL}{\mbox{$\overline{\textrm{\textsc{ll}}}$}}
\newcommand{\ALLL}{\mbox{\textsc{adaptive-lll}}\xspace}
\DeclareMathOperator{\rk}{rk}
\DeclareMathOperator{\covol}{covol}
\newcommand{\vect}[1]{{#1}}
\newcommand{\inter}[1]{\underline{#1}}
\newcommand{\Vol}[1]{\covol\left({#1}\right)}
\newcommand{\inner}[2]{\langle \vect{#1},\vect{#2} \rangle}
\newcommand{\Gram}{\mathcal{G}}
\newcommand{\flag}[1]{\mathcal{F}_{#1}}
\newcommand{\NN}{\mathbf{N}}
\newcommand{\ZZ}{\mathbf{Z}}
\newcommand{\RR}{\mathbf{R}}
\newcommand{\KK}{\mathbf{K}}
\newcommand{\QQ}{\mathbf{Q}}
\newcommand{\CC}{\mathbf{C}}
\newcommand{\Gl}{\textrm{Gl}}
\newcommand{\Sym}{\mathcal{S}}
\newcommand{\Lat}{\Lambda}
\newcommand{\Bas}{\mathcal{B}}
\newcommand{\order}{\mathfrak{o}}
\newcommand{\Mult}[1]{\mathcal{M}\left({#1}\right)}
\newcommand{\bigO}[1]{\textrm O\left(#1\right)}
\newcommand{\littleO}[1]{o\left(#1\right)}
\newtheorem{theorem}{Theorem}[section]
\newtheorem{lemma}[theorem]{Lemma}
\newtheorem{proposition}{Proposition}
\newtheorem*{problem}{Problem}
\theoremstyle{definition}
\newtheorem{definition}[theorem]{Definition}
\title[Certified Lattice reduction]
      {Certified Lattice reduction}
\author[Thomas Espitau and Antoine Joux]{}
\subjclass{11H06, 11H55, 11R04}
\keywords{Lattice Reduction, Quadratic forms reduction, Algorithmic number theory}
 \email{thomas.espitau@lip6.fr}
 \email{antoin.joux@m4x.org}
\thanks{This work has been supported in part by the European Union as H2020
Programme under grant agreement number ERC-669891.}
\begin{document}
\maketitle

\centerline{\scshape Thomas Espitau}
\medskip
{\footnotesize
 \centerline{Sorbonne Universit\'e}
   \centerline{LIP 6, CNRS UMR 7606}
   \centerline{Paris, France}
}

\medskip

\centerline{\scshape Antoine Joux}
\medskip
{\footnotesize
  \centerline{\it Chaire de Cryptologie de la Fondation SU}
  \centerline{Sorbonne Universit\'e, Institut de Math\'ematiques de
  Jussieu--Paris Rive Gauche}
  \centerline{CNRS, INRIA, Univ Paris Diderot.}
  \centerline{Campus Pierre et Marie Curie, F-75005, Paris, France}
}

\begin{abstract}
Quadratic form reduction and lattice reduction are fundamental tools
  in computational number theory and in computer science, especially
  in cryptography.  The celebrated Lenstra--Lenstra--Lovász reduction
  algorithm (so-called \LLL) has been improved in many ways through
  the past decades and remains one of the central methods used for
  reducing \emph{integral} lattice basis. In particular, its
  floating-point variants---where the rational arithmetic required by
  Gram--Schmidt orthogonalization is replaced by floating-point
  arithmetic---are now the fastest known. However, the systematic
  study of the reduction theory of \emph{real} quadratic forms or, more
  generally, of real lattices is not widely represented in the
  literature. When the problem arises, the lattice is usually replaced
  by an integral approximation of  (a multiple of) the original
  lattice, which is then reduced. While practically useful and proven
  in some special cases, this method doesn't offer any guarantee of
  success in general.  In this work, we present an adaptive-precision
  version of a generalized \LLL~algorithm that covers this case in all
  generality. In particular, we replace floating-point arithmetic by
  Interval Arithmetic to certify the behavior of the algorithm. We
  conclude by giving a typical application of the result in algebraic
  number theory for the reduction of ideal lattices in number fields.
\end{abstract}

\section{Introduction}
In a general setting, a \emph{lattice} $\Lat$ is a free $\ZZ$-module
of finite rank, endowed with a positive-definite bilinear form on its
ambient space $\Lat\otimes_\ZZ \RR$, as presented for instance
in~\cite{LenstraS17}. In particular, this definition implies that
$\Lat$ is discrete in its ambient space for the topology induced by
the scalar product. This formalism encompasses the well-known
\emph{Euclidean lattices} when taking the canonical scalar product of
$\RR^d$, but also lattices arising from ideals in rings of integers of
number fields.
The rank of the lattice $\Lat $ is defined as the dimension
of the vector space $\Lat\otimes_\ZZ \RR$.  By definition of a
finitely-generated free module, there exists a finite set of vectors
$b_1, \ldots, b_{\rk \Lat} \in \Lat$ such that
$\Lat = \bigoplus_{i=1}^{\rk \Lat} b_i \ZZ$. Such a family is called a
basis of the lattice and is not unique. In fact, as soon as
$\rk \Lat \geq 2$ there are infinitely many bases of $\Lat$. Some
among those have interesting properties, such as having reasonably
small vectors and low orthogonality defects. They are informally
called \emph{reduced bases} and finding them is the goal of
\emph{lattice reduction}.

Numerous algorithms
arising in algebraic number theory heavily rely on lattice reduction,
for example, the computation of normal forms of integral matrices
(see~\cite{Jager05} for the Hermite Normal Form and~\cite{Havas98} for
the Smith Normal Form), class group computations in a number
field~\cite{GelinJ16,BiasseF13},
or even the enumeration of points of
small height near algebraic curves~\cite{Elkies00}.

Even for lattices that use the canonical scalar product, there is a
deep link with bilinear forms that clearly appears when considering
the \emph{Gram matrix} of a basis
$\Bas=\{\vect{b_1}, \dotsc, \vect{b_d}\}$, that is, the real symmetric
matrix $\Gram = \left( \inner{b_i}{b_j}\right)_{i,j}$.
\medskip

The study of these reduction problems is not recent and goes back to
the works of Lagrange and Gauss. These early works were expressed in
terms of reduction of quadratic form, more precisely integral binary
quadratic forms\footnote{This can be viewed as the reduction of
  integral dimension-two lattices.} and led to a
method often called Gauss' algorithm. This method can be seen as a
2-dimensional extension of the Euclid algorithm for computing the
greatest common divisor of two integers.  In 1850, Hermite proved a
general upper bound on the length of the shortest vector in a lattice,
given as a function of the dimension and of a very important invariant
called the determinant, which is defined in
Section~\ref{sec:gram_schmidt}. This bound involves the so-called
Hermite constant and has recently been rephrased in algorithmic
terms~\cite[Hermite's Algorithms]{Nguyen2010}.  A century later, in
1982, Lenstra, Lenstra and Lovász designed the
\LLL~\emph{algorithm}~\cite{LLL82}, with the polynomial factorization
problem as an application, following the work of Lenstra on integer
programming~\cite{Lenstra1983}.  This algorithm constitutes a
breakthrough in the history of lattice reduction algorithm, since it
is the first to have a runtime polynomial in terms of the
dimension. It was followed by many improvements lowering
its complexity or improving the output's quality.  \medskip

Current implementations of \LLL{} often work with low precision
approximations in order to greatly speed-up the computations. Indeed,
the algorithm works surprinsingly well even with such reduced
precisions, even if  some care needs to be taken to avoid infinite
loops. Moreover, once the result is obtained, it can verified
efficiently as shown in~\cite{Villard07}.\medskip

We propose here an alternative strategy where we not only certify that
the end-result is a reduced basis but also that the algorithm followed
a valid computation path to reach it. This strongly deviates from
other approaches that have been taken to obtain guaranteed lattice
reduced basis. At first, this may seems irrelevant. After all, one
might claim that a basis satisfying the end conditions of \LLL{} is what
is desired and that the computation path doesn't matter. However, as
shown in~\cite{KimVenkatesh16} for Siegel-reduced bases, a reduced
basis chosen uniformely at random behaves as the worst-case allowed by
the final inequalities. By constrast, bases produced by the \LLL{}
algorithm are usually much better than this worst-case. This argues in
favor of trying to follow the algorithm defintion exactly to better
understand the phenomenon. In particular, this option might be
invaluable for experiments performed toward analyzing this gap.\medskip

The present article also relies on Interval Arithmetic, a
representation of reals by intervals---whose endpoints are
floating-point numbers---that contain them.  Arithmetic operations, in
particular the basic operations $+, -, \times , \div$ can be redefined
in this context.  The main interest of this representation lies in its
\emph{certification} property: if real numbers are represented by
intervals, the interval resulting from the evaluation of an algebraic
expression contains the exact value of the evaluated
expression.\medskip

For some authors, Interval Arithmetic was introduced by
R. Moore in 1962 in his Ph.D. thesis~\cite{Moore62}. For others, it can
be dated back to 1958, in an article of T. Sunaga~\cite{Sunaga58} which
describes an algebraic interpretation of the lattice of real intervals,
or even sooner in 1931 as a proposal in the Ph.D. thesis~\cite{Young31}
of R.C.~Young at Cambridge.
Its main
asset---calculating
directly on sets---is nowadays used to deterministically determine
the global extrema of a continuous function~\cite{Ratschek} or
localizing the zeroes of a function and (dis)proving their
existence~\cite{jaulin2001}. Another application of Interval Arithmetic
is to be able to detect lack of precision at run-time of numerical
algorithms, thanks to the guarantees it provides on computations.
This can, in particular, be used to design adaptive-precision
numerical algorithms.
 \medskip

In the present paper, we propose to transform and generalize
the \LLL~algorithm into an adaptive-precision version, which can reduce
arbitrary lattices and follows a certified flow of execution. More
precisely, it uses Interval Arithmetic to validate the size-reduction
and exchange steps that occur within \LLL.

The interested reader may download an implementation of the algorithm
from the webpage \url{http://almacrypt.eu/outputs.php}.

\subsection*{Organisation of the paper}
In Section 2, we briefly introduce reduction theory and present the \LL
variant of the \LLL~algorithm.  Section 3 aims at describing the basics
of Interval Arithmetic used in Section 4 to handle the problem of
representation of real lattices.  The framework of this latter section
is then used in Section 5 to derive a certified reduction algorithm for
real lattices. Section 6 presents an application to
algorithmic number theory.

\subsection*{Notations and conventions}

\subsubsection*{General notations}
As usual, the bold capitals $\ZZ$, $\QQ$, $\RR$ and $\CC$ refer respectively to the
ring of integers and the fields of rational, real and
complex numbers.  Given a real number $x$, the integral roundings
\emph{floor}, \emph{ceil} and \emph{round to nearest integer} are
denoted respectively by
$\lfloor x\rfloor, \lceil x\rceil, \lfloor x \rceil$. Note that the
rounding operator is ambiguous when operating on
half-integers. However, either choice when rounding is acceptable in
lattice reduction algorithms. In fact, in this context, it is often enough to return an
integer close to $x$, not necessarily the closest.

These operators are extended to operate on vectors and matrices by point-wise
composition. The complex conjugation of $z\in\CC$ is denoted by the
usual bar $\bar{z}$ whereas the real and imaginary parts of a complex
$z$ are indicated by respectively $\mathfrak{R}(z)$ and
$\mathfrak{I}(z)$. All logarithms are taken in base $2$.

\subsubsection*{Matrices and norms}
For a field $\KK$, let us denote by $\KK^{d\times d}$ the space of
square matrices of dimension $d$ over $\KK$, $\Gl_d(\KK)$ its group of
invertible elements and $\Sym_d(\KK)$ its subspace of symmetric matrices.  For a
complex matrix $A$, we write $A^\dagger$ for its conjugate transpose.
For a vector $v$, we denote by $\|v\|_\infty$ its absolute (or infinity)
norm, that is the maximum of the absolute value of its entries. We
similarly define the matrix \emph{max norm} $\| B \|_\textrm{max} =
\max_{(i,j)\in [1\,\cdots\,d]^2} |B_{i,j}|$, for any matrix~$B$.

\subsubsection*{Computational setting}
The generic complexity model used in this work is the random-access
machine  (RAM) model and the computational cost is measured in bits
operations.  $\Mult{k}$ denotes the complexity of the multiplication of
two integers of bit length at most $k$. It is also the cost of the
multiplication of two floating-point numbers at precision $k$, since
the cost of arithmetic over the exponents is negligible with regards to
the cost of arithmetic over the mantissae.

\section{Basics of Lattice Reduction}
\subsection{Orthogonalization}
\label{sec:gram_schmidt}
Let us fix an Euclidean space $(E,\inner{\cdot}{\cdot})$, i.e. a
real vector space $E$ together with a positive-definite bilinear form
$\inner{\cdot}{\cdot}: E\times E \to \RR$. As usual, two vectors
$x,y\in E$ are said to be orthogonal---with respect to the form
$\inner{\cdot}{\cdot}$---if $\inner{x}{y} = 0$. More generally a
family of vectors is orthogonal if its elements are pairwise
orthogonal.

Now consider $S = (b_1,\dots, b_r),$ a family of linearly independent
vectors of $E$. The \emph{flag} $\flag{S}$ associated to $S$ is the
finite increasing chain of subspaces:
\[
  b_1 \RR \subset b_1 \RR  \oplus b_2\RR \subset \cdots \subset
  \bigoplus_{i=1}^r b_i\RR.
\]
The orthogonal complement $S^\bot$ is defined as the subspace
$\{ x\in E~|~ \forall i,~\inner{x}{b_i} = 0\}$. Denote by $\pi_i$ the
orthogonal projection on $(b_{1}, \dotsc, b_{i-1})^\bot$, with the
convention that $\pi_1$ is the identity map. The Gram--Schmidt orthogonalization
process---shorthanded as \GSO---is an algorithmic method for
orthogonalizing $S$ while preserving its flag. It constructs
the orthogonal set
$S^* = \left(\pi_1(b_1), \ldots, \pi_r(b_r)\right)$. The computation
$S^*$ can be done inductively as follows:
\[
  \begin{aligned}
    \pi_1(b_1) &= b_1\\
    \forall 1<i\leq r,\quad  \pi_i(b_i) &= b_i - \sum_{j=1}^{i-1}
    \frac{\inner{b_i}{\pi_j(b_j)}}{\inner{\pi_j(b_j)}{\pi_j(b_j)}}b_j.
\end{aligned}
\]
Define the \emph{Gram matrix}, associated to a family of vectors
$S = (b_1, \ldots, b_r)$, as the symmetric matrix of scalar products:
$\Gram_S = \left(\inner{b_i}{b_j}\right)_{(i,j)\in [1\,\cdots\,
  r]^2}$. The (co)volume of~$S$, also called its determinant, is
defined as the square root of the Gram determinant $\det \Gram_S$.  It
can be easily computed from the Gram-Schmidt vectors $S^*$ as:
\[
\Vol{S} = \prod_{i=1}^r \|\pi_i(b_i)\| \]

\subsection{Lattices and reduction}

\begin{definition}
  A (real) \emph{lattice}  $\Lat$ is a finitely generated free
  $\ZZ$-module,
  endowed with a positive-definite bilinear form $\inner{\cdot}{\cdot}$
  on its ambient space $\Lat\otimes_\ZZ \RR$.

  By definition of the tensor product, there is a canonical injection
  that sends a vector $v$ to $v\otimes 1$ in the ambient space and
  preserves linear independence.  Thus, the rank of $\Lat$ as a
  $\ZZ$-module, is equal to the dimension of the vector space
  $\Lat\otimes_\ZZ \RR$.

  Denoting by $d$ the rank of the lattice, a \emph{basis} of $\Lat$ is
  a family $b_1, \ldots, b_d$ of elements of $\Lat$ such that
  $\Lat = \bigoplus_{i=1}^d b_i \ZZ$.
\end{definition}
In the sequel, we identify $\Lat$ with its canonical image
$\Lat\otimes 1$ and thus view the lattice as an additive subgroup of
its ambient space $\Lat\otimes_\ZZ \RR$.  When the context makes it
clear, we may omit to write down the bilinear form associated to a
lattice $\Lat$.  Throughout this section, $\|\cdot\|$ stands for the
Euclidean norm induced by $\inner{\cdot}{\cdot}$, unless stated
otherwise.  As usual, any two bases $(b_1, \ldots, b_d)$ and
$(b'_1, \ldots, b'_d)$ of $\Lat$ are related by a unimodular
transformation, i.e., a linear transformation represented by a
$d\times d$ integer matrix of determinant $\pm 1$.

\begin{lemma}
A lattice $\Lat$ is discrete for the topology induced by the given
norm on its ambient space. I.e., there exists a real $\epsilon_\Lat>0$ such
that for any pair $(x,y)$ of elements of $\Lat$ with $x\neq y$ we
have:
$$
\| x-y \| \geq \epsilon_\Lat.
$$

The largest possible value for $\epsilon_\Lat$ in the above inequality
is equal to the norm of the shortest non-zero vector of $\Lat$, which
is traditionally called the \emph{first minimum} or the \emph{minimum
  distance} of the lattice and denoted by $\lambda_1 (\Lat)$.
\end{lemma}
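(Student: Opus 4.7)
The plan is to reduce the discreteness statement to a lower bound on the norm of any nonzero vector of $\Lat$ via the Gram--Schmidt orthogonalization just introduced. Fix a basis $(b_1,\dotsc,b_d)$ of $\Lat$, and let $(b_1^*,\dotsc,b_d^*)=(\pi_1(b_1),\dotsc,\pi_d(b_d))$ denote its \GSO. Set $m=\min_{1\leq i\leq d}\|b_i^*\|$; since the $b_i$ are linearly independent, each $\pi_i(b_i)$ is nonzero, so $m>0$. I would then show that every nonzero $v\in\Lat$ satisfies $\|v\|\geq m$.

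To prove this lower bound, write $v=\sum_{i=1}^d a_i b_i$ with $a_i\in\ZZ$ not all zero, and let $k$ be the largest index with $a_k\neq 0$. Because $\pi_k(b_j)=0$ for $j<k$ and $\pi_k(b_k)=b_k^*$, applying the projection $\pi_k$ yields $\pi_k(v)=a_k b_k^*$. Since orthogonal projection is norm-decreasing and $|a_k|\geq 1$, we obtain
\[
\|v\|\geq \|\pi_k(v)\|=|a_k|\,\|b_k^*\|\geq \|b_k^*\|\geq m.
\]
Applied to $v=x-y$ with $x\neq y$ in $\Lat$, this already gives the discreteness statement with $\epsilon_\Lat=m$.

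For the identification of the optimal constant with $\lambda_1(\Lat)$, I would first argue that the infimum $\mu=\inf\{\|v\|:v\in\Lat\setminus\{0\}\}$ is attained. Indeed, $\mu\geq m>0$, and the closed ball $\overline{B}(0,\mu+1)$ is compact in the ambient Euclidean space; the previous bound forces any two distinct lattice points in this ball to be at distance $\geq m$, so by a standard packing/Bolzano--Weierstrass argument $\Lat\cap \overline{B}(0,\mu+1)$ is finite, and the infimum is a minimum. Call its value $\lambda_1(\Lat)$. Then for any $x\neq y$ in $\Lat$, $x-y\in\Lat\setminus\{0\}$ gives $\|x-y\|\geq \lambda_1(\Lat)$; conversely, taking $x$ to be a shortest nonzero vector and $y=0$ shows that no constant strictly larger than $\lambda_1(\Lat)$ can satisfy the inequality. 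Hence $\lambda_1(\Lat)$ is exactly the largest admissible $\epsilon_\Lat$.

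The only mildly delicate step is the existence of a shortest nonzero vector: the $\ZZ$-module structure alone is not sufficient, and one genuinely needs to combine the \GSO-based lower bound with a compactness argument in the ambient space. Everything else is routine manipulation of the basis decomposition.
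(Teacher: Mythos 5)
Your proof is correct, and it takes a genuinely different (and in fact sharper) route than the paper's. The paper argues by contradiction: it takes a sequence of lattice vectors $x_i$ with $\|x_i\|^2\leq 2^{-i}$, expands each in the orthogonal basis $\Bas^*$, deduces from $|\chi_i^{(j)}|^2\|\pi_j(b_j)\|^2\leq\|x_i\|^2$ that the $\Bas^*$-coordinates tend to zero, and then transfers this to the integral coordinates via the basis-change matrix to conclude they are ultimately zero. You instead prove the quantitative bound $\|v\|\geq\min_i\|\pi_i(b_i)\|$ for every nonzero $v\in\Lat$ directly, by projecting onto $(b_1,\dotsc,b_{k-1})^\bot$ for $k$ the top nonzero coordinate; this yields an explicit admissible $\epsilon_\Lat$ with no limiting argument, and all the steps ($\pi_k(b_j)=0$ for $j<k$, projections being contractions, $|a_k|\geq 1$) check out. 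You also prove the second assertion of the lemma---that the optimal constant is attained and equals $\lambda_1(\Lat)$---via finiteness of $\Lat$ in a compact ball, which the paper's proof does not address at all; your remark that this attainment genuinely requires combining the lower bound with compactness is well taken. The paper's approach buys nothing over yours here; yours is both more elementary in structure and more informative.
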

\begin{proof}
  Let $\Bas = (b_1, \ldots, b_d)$ be a basis of $\Lat$. Let
  $\Bas^* = (\pi_1(b_1), \ldots, \pi_d(b_d))$ be the orthogonal basis obtained by
  applying Gram-Schmidt orthogonalization to the canonical image of
  $\Bas$ in $\Lat\otimes_\ZZ \RR$. This orthogonalization is taken using
  as scalar product the given bilinear form.

Assume by contradiction that there exist pairs of distinct vectors
with the norm of their difference arbitrarily small.  Since the
difference is also an element of $\Lat$, there are non-zero elements
of arbitrarily small norm in $\Lat$.
For any integer $i>0$, choose a
vector $x_i$ in $\Lat$ with $\|x_i\|^2\leq 2^{-i}.$ Decompose
$x_i$ in the basis $\Bas^*$ as
$
x_i=\sum_{j=1}^{d}\chi_i^{(j)}\pi_j(b_j).
$
For any pair of integers $i$, $j$ we see that
$
|\chi_i^{(j)}|^2\,\|\pi_j(b_j)\|^2\leq \|x_i\|^2\leq 2^{-i}.
$
As a consequence, each sequence $\chi^{(j)}$ converges to
zero. Multiplying by the basis-change matrix, we see that the coordinates
of $x_i$ in the basis $b_1, \ldots, b_d$ also converge to zero. Since these
coordinates are integral, the sequences are ultimately constant and
$x$ is also ultimately constant (and null). This contradicts the choice of
$x_i$ as a non-zero element.
\end{proof}
\medskip

\subsection{The LLL~reduction algorithm}
In 1982, Lenstra, Lenstra and Lovász~\cite{LLL82} proposed a
notion called \emph{\LLL~reduction} and a polynomial-time
algorithm that computes an {\LLL}-reduced basis from arbitrary basis
of the same lattice. Their reduction notion is formally defined as follows:

\begin{definition}[\LLL~reduction]
\label{LLLstdDef}
  A basis   $\Bas = (b_1, \ldots, b_d)$ of a lattice is said to be
  $\delta$-\LLL-\textbf{reduced} for  a parameter $1/4<\delta<
  1$, if the following conditions are satisfied:
  \begin{equation}
    \forall i <j, \quad
    \left|\langle{{b_j}},{\pi_i(b_i)}\rangle \right|
    \leq \frac{1}{2}{\|\pi_i(b_i)\|^2}
    \quad \textrm{(size-reduction condition)}
  \end{equation}
  \begin{equation}
    \forall i, \quad \delta \|\pi_i({b_i})\|^2 \leq
    \left( \|\pi_{i+1}(b_{i+1})\|^2 +
        \frac{\inner{b_{i+1}}{\pi_i(b_i)}}{\|\pi_i(b_i)\|^2}
    \right)
    \quad \textrm{(Lovász condition)}
  \end{equation}
\end{definition}

In order to find a basis satisfying these conditions, it suffices to
iteratively modify the current basis at any point where one of these
conditions is violated. This yields the simplest version of the
\LLL~algorithm. As in~\cite{LLL82}, it is only defined for full-rank
sublattice of $\ZZ^d$. It was remarked by Lovász and Scarf
in~\cite{LovaszS92} that the same algorithm also works with an
arbitrary integral-valued scalar product. The method can be extended
to deal with lattices described by a generating family rather than by
a basis~\cite{Pohst87}.

\begin{algorithm}[H]
  \BlankLine
  \Parameter{$\delta \in (1/4,1)$}
  \KwIn{Initial basis $\Bas=({b_1}, \ldots, {b_d})$}
  \KwResult{A $\delta$-\LLL-reduced basis}
  \BlankLine
  $k\gets 2$\;
  Compute the $\pi_i(b_i)$'s with the
  \textsc{gso} process (Paragraph~\ref{sec:gram_schmidt})\;
  \While{$k\leq d$}{
    \lFor{$j = k-1$ \Downto $1$}{
      $b_k \gets b_k - \left\lceil
      \frac{\inner{b_k}{\pi_j(b_j)}}{\|{\pi_j(b_j)}\|^2}\right\rfloor\cdot
      b_j$
    }
    \uIf{$\delta \|\pi_{k-1}(b_{k-1})\|^2\leq
       \|\pi_{k}(b_{k})\|^2 +
        {\inner{b_{k}}{\pi_{k-1}(b_{k-1})}^2}/{\|\pi_{k-1}(b_{k-1})\|^2}$}{
      $k \gets k+1$\;
    }
    \uElse{
      Swap $b_k$ and $b_{k-1}$; Update $\pi_k(b_k)$ and $\pi_{k-1}(b_{k-1})$\;
      $k \gets \max(k-1,2)$\;
    }
  }
  \Return (${b_1}, \ldots, {b_d}$)
  \caption{The original \LLL~algorithm.}
  \label{LLL}
\end{algorithm}

\subsubsection{Decrease of the potential and complexity.}
The algorithm can only terminate when the current lattice basis is
{\LLL}-reduced. Moreover, as shown in~\cite{LLL82}, it terminates in
polynomial time when $\delta < 1$. Indeed, consider the (square of
the) product of the covolumes of the flag associated with a basis:
$\prod_{i=1}^d \|\pi(b_i)\|^{2(d-i+1)},$ which is often called its
\emph{potential}. This value decreases by a factor at least
$\delta^{-1}$ in each exchange step and is left unchanged by other
operations.
 Indeed:
\begin{itemize}
\item The flag is not modified by any operation other than swaps.
\item A swap between ${b_k}$ and ${b_{k-1}}$ only changes the sublattice
  spanned by the first $k-1$ vectors. The corresponding covolume
$\prod_{i=1}^{k-1} \|\pi(b_i)\|^{2}$ decreases by a factor at least
$\delta^{-1}$ and so does the potential.
\end{itemize}

Since the total number of iterations can be bounded by
twice the number of swaps plus the dimension of the lattice, this
suffices to conclude that it is
bounded by $\bigO{d^2\log\|B\|_\textrm{max} }$ where $B$ is the matrix of the initial
basis.

As the cost of a loop iteration is of $\bigO{d^2}$ arithmetic operations on
\emph{rational} coefficients of length at most
$\bigO{d \log\|B\|_\textrm{max}}$, the total cost in term of
arithmetic operations is loosely bounded by
$\bigO{d^6 \log^3\|B\|_\textrm{max}}$. By being more precise in the
majoration of the bit length of the integers appearing in \LLL, this
analysis can be improved. Kaltofen in~\cite{Kaltofen83} bounds the
complexity by
$$\bigO{\frac{d^5\log^2\|B\|_\textrm{max}}{d+\log\|B\|_\textrm{max}}
\mathcal{M}(d+\log\|B\|_\textrm{max})}.$$

\subsubsection{A bound on the norm of reduced elements}
\begin{proposition}
  \label{prop:norm_bound}
  Let $1/4< \delta< 1$ be an admissible \LLL~parameter.
  Let $(b_1, \ldots, b_d)$ be a $\delta$-\LLL~reduced basis of rank-$d$ lattice
  $(\Lat, \inner{\cdot}{\cdot})$. Then for any $1\leq k\leq d$:
  \[
    \Vol{b_1,\ldots,b_k}\leq \left(\delta-\frac{1}{4}\right)^{-\frac{(d-k)k}{4}}
    \Vol{\Lat}^{\frac{k}{d}}.
  \]
\end{proposition}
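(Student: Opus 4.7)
The plan is to parlay the two \LLL{} conditions into a geometric-like bound on the sequence of Gram-Schmidt norms $r_i := \|\pi_i(b_i)\|^2$, and then exploit this bound by combining enough of the resulting inequalities to interpolate between $\Vol{b_1,\ldots,b_k}$ and $\Vol{\Lat}^{k/d}$.

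First I would take the Lovász condition at index $i$ and absorb the cross term using size-reduction: since $|\langle b_{i+1},\pi_i(b_i)\rangle| \leq \frac{1}{2}\|\pi_i(b_i)\|^2$, the term $\langle b_{i+1},\pi_i(b_i)\rangle^2/\|\pi_i(b_i)\|^2$ is at most $\frac{1}{4}\|\pi_i(b_i)\|^2$. Substituting and rearranging yields the fundamental step inequality
\[
  (\delta - \tfrac{1}{4})\, r_i \;\leq\; r_{i+1}.
\]
Writing $\alpha = (\delta-\frac{1}{4})^{-1}$ and iterating gives $r_i \leq \alpha^{j-i}\, r_j$ for all $1 \leq i \leq j \leq d$.

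Next, I would specialize to pairs $(i,j)$ that straddle the index $k$: for every $1 \leq i \leq k < j \leq d$, multiply the inequality $r_i \leq \alpha^{j-i} r_j$. On the left, each $r_i$ with $i\leq k$ appears $d-k$ times; on the right, each $r_j$ with $j>k$ appears $k$ times; the total exponent of $\alpha$ is $\sum_{i=1}^{k}\sum_{j=k+1}^{d}(j-i)$, which a direct count reveals to equal $\tfrac{1}{2}kd(d-k)$. This produces
\[
  \Bigl(\prod_{i=1}^{k} r_i\Bigr)^{d-k} \;\leq\; \alpha^{\frac{kd(d-k)}{2}} \Bigl(\prod_{j=k+1}^{d} r_j\Bigr)^{k}.
\]
To finish, I would multiply both sides by $(\prod_{i=1}^{k} r_i)^{k}$ so the right-hand product becomes $\prod_{i=1}^{d} r_i = \Vol{\Lat}^2$, then extract the $d$-th root and the square root. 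The factor $\alpha^{kd(d-k)/2}$ is raised to the power $\frac{1}{2d}$, becoming $\alpha^{k(d-k)/4} = (\delta-\tfrac{1}{4})^{-k(d-k)/4}$, which is exactly the claimed constant.

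I do not anticipate any real obstacle in this proof; everything reduces to a single scalar inequality propagated inductively and a bookkeeping exercise on a double sum. The only subtle point is recognizing that size-reduction on consecutive indices is precisely what turns the Lovász inequality into a clean geometric-decay statement on the $r_i$, and that multiplying exactly the $k(d-k)$ cross-pair inequalities (rather than, say, a telescoping chain) yields the correct interpolation exponent $k/d$.
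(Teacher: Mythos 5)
Your proof is correct and follows essentially the same route as the paper's: derive the one-step decay $(\delta-\tfrac14)\|\pi_i(b_i)\|^2\le\|\pi_{i+1}(b_{i+1})\|^2$ from the Lovász condition with the cross term absorbed by size-reduction, then multiply the $k(d-k)$ cross-pair inequalities and pad with $(\prod_{i\le k}r_i)^k$ to isolate $\Vol{\Lat}$. The paper merely organizes the same bookkeeping by first bounding $\Vol{b_1,\ldots,b_k}$ in terms of each $\ell_j^k$ for $j>k$ (and states it with general $\eta$, giving $(\delta-\eta^2)$ in place of $(\delta-\tfrac14)$), so the two arguments are the same up to presentation.
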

Note that this is an easy generalization of the bound on the norm of
$b_1$ which is given in most texts. It appears among other related
inequalities in~\cite{PatakiT08}. For completeness, a proof is given
in Appendix.

\subsubsection{Floating point representation}
The total cost of the \LLL~algorithm is dominated by the computation
to handle arithmetic on rational values.  A first idea of
De~Weger~\cite{DeWeg} to overcome this issue is to avoid the use of
denominators by multiplying all the quantities by their common
denominator. This is slightly more efficient in practice but doesn't
improve the asymptotics.  Another idea is to remark that the norms of
the rational values remain small and to try to use approximations
instead of exact values.  However, directly replacing rationals in the
\LLL~algorithm by floating-point approximations leads to severe
drawbacks. The algorithm might not even terminate, and the output
basis is not guaranteed to be \LLL-reduced.
\medskip

The first \emph{provable} floating-point version of the algorithm is due to
Schnorr in~\cite{Schnorr88},
with complexity $\bigO{d^4\log(\|B\|_\textrm{max})\mathcal{M}(d+\log\|B\|_\textrm{max})}$.
One of the key ingredients to achieve this reduction is to slightly relax the
definition of the size-reduction, in order to compensate for
the approximation errors introduced by the use of floating-point arithmetic.
We call \emph{admissible} any parameters $(\delta, \eta)$
satisfying $1/4<\delta < 1$, and $1/2 < \eta <\sqrt{\delta}$ and define:

\begin{definition}[$(\delta,\eta)$-\LLL~reduction]
  Let $(\delta, \eta)$ be admissible parameters.
  A basis $\Bas$ of a lattice is said to be
  $(\delta,\eta)$-\textbf{LLL-reduced}
  if the following condition is satisfied:
  \begin{equation}
    \forall i <j, \quad
    \left|\langle{{b_j}},{\pi_i(b_i)}\rangle \right|
    \leq \eta{\|\pi_i(b_i)\|^2}
    \quad \textrm{(Approximate size-reduction condition)}
  \end{equation}
together with the Lovász condition, which is kept unchanged from
Definition~\ref{LLLstdDef}.

\end{definition}
Using naive multiplication, the cost of Schnorr's algorithm is cubic
in the size of the numbers, i.e. in $\log(\|B\|_{\textrm{max}})$. The
introduction of approximate size reduction removes the need to know
with extreme precision values close to half-integers. Instead,
approximate size reduction of such values can be achieved by rounding
either up or down in an arbitrary (possibly randomized) manner. In our
pseudo-code, we use a function called $\eta$-\textsc{Closest-Integer}
to achieve this rounding, returning an integer at distance at most
$\eta$ of the function's argument.

\subsection{The $L^2$ algorithm}

The \LL~algorithm is a variant of Schnorr-Euchner
version~\cite{SchnorrEuchner} of \LLL. By contrast with the original
algorithm, \LL~computes the \textsc{gso} coefficients on the fly as
they are needed instead of doing a full orthogonalization at the
start.  It also uses a lazy size reduction inspired by the Cholesky
factorization algorithm.  These optimizations yield an improved
lattice reduction with running time
\[ \bigO{d^5(d+\log(\|B\|_\textrm{max}))\log(\|B\|_\textrm{max})}. \]

As usual in lattice reduction, while performing the Gram-Schmidt
orthogonalization of $\Bas$, we also compute
\emph{QR}-decomposition of $B$ into $B^*\cdot M$ where $B^*$ is the matrix
representing the $\left(\pi_i(b_i)\right)_{1\leq i \leq d}$,  and $M$
is the upper unitriangular
matrix, whose coefficients with $j\geq i$ are $M_{i,j} =
\frac{\inner{b_j}{\pi_i(b_i)}}{\|\pi_i(b_i)\|^2}$. Thus, the Gram
matrix associated to the basis, i.e., $G = B^TB$ satisfies:
\[
  G = M^T\cdot {B^*}^T \cdot B^* \cdot M  = M^T\cdot D\cdot M
\]
where  $D$ is a diagonal matrix whose entries are
${\|\pi_i(b_i)\|^2}$.
We denote by $R$ the matrix $D\cdot M$, and thus have
$ G = R^T\cdot M = M^T\cdot R.$

We give the pseudo-code of the Lazy Size-Reduction procedure as Algorithm~\ref{SredLazy}
and of the \LL~algorithm as Algorithm~\ref{LL}. Both use classical
formulas relating $R$, $M$ and $B^*$ to perform the computations.

\begin{algorithm}[p]
  \BlankLine
  \KwIn{Initial basis $\Bas = ({b_1}, \ldots, {b_d})$, with $G$, $R$
    and $M$. An integer
  $1\leq k \leq d$.}
  \KwResult{Size-reduces $b_k$, updates $G$,
    $R$, $M$ and returns $s^{(k)}$ }
  done $\gets \False$\;
  \While{{\rm done} $= \False$}{
    \For{$j=1$ \KwTo $k-1$}{
      ${R_{k,j}} \gets G_{k,j}$;
      \lFor{$i =1$ \KwTo $j-1$}{
        $R_{k,j} \gets R_{k,j} - M_{j,i}R_{k,i}$}
      ${M_{k,j}} \gets {R_{k,j}}/{R_{j,j}}$;
    }
    ${s}_1^{(k)} \gets G_{k,k}$;
    \lFor{$j=2$ \KwTo $k$}{
      ${s}_j^{(k)} \gets {s_{j-1}^{(k)}}
      - {M}_{k,j-1}\cdot {R_{k,j-1}}$}
    ${R_{k,k}} \gets {s_k^{(k)}}$\;
    \lIf{$(\max_{j<k} |M_{k,j}|) \leq \eta$}{done $\gets \True$}
    \Else{
      \For{$i=k-1$ \Downto $1$}{
        $X_i \gets$ $\eta$-\textsc{Closest-Integer}$(M_{k,i})$\;
        \lFor{$j=1$ \To $i-1$}{$M_{k,j} \gets M_{k,j} - X_iM_{i,j}$}
      }
      $b_k \gets b_k - \sum_{i=1}^k X_i b_i$;
      Update $G$ accordingly\;
    }
  }
  \caption{The lazy size reduction algorithm, $\eta$-\textsc{LazyRed}.}
  \label{SredLazy}
\end{algorithm}

\begin{algorithm}
  \BlankLine
  \Parameter{ $\delta \in (1/4,1), \eta \in (1/2, \sqrt{\delta}$).}
  \KwIn{Initial basis $\Bas=({b_1}, \ldots, {b_d})$}
  \KwResult{A $(\delta,\eta)$-\LLL-reduced basis}
  \BlankLine
  Compute $G = G({b_1}, \cdots, {b_d})$ in exact integer arithmetic\;
  ${R}_{1,1} \gets G_{1,1}$\;
    $k \gets 2$\;
    \While{$k \leq d$ }{
      Apply size reduction $\eta$-\textsc{LazyRed}$(k)$\;
      $k' \gets k$\;
      \lWhile{($k \geq 2$ \And $\delta {R}_{k-1,k-1} > {s}_{k-1}^{k'})$}{
        $k\gets k-1$
        }
        $R_{k,k} \gets s_{k}^{k'}$\;
        \If{$k\neq k'$}{
          \lFor{$i=1$ \KwTo $k-1$}{
            $M_{k,i} \gets M_{k',i}$;
            $R_{k,i} \gets R_{k',i}$
        }
        $R_{k,k} \gets s_{k}^{k'}$\;
          Insert ${b_{k'}}$ at pos $k$ (before $b_k$) and update matrix $G$ accordingly\;
        }

        $k\gets k+1$\;
      }
  \Return (${b_1}, \cdots, {b_d}$)
  \caption{The $L^2$ Algorithm.}
  \label{LL}
\end{algorithm}

\subsubsection{Precision required.}
\label{LLprecision}
The precision required by the \LL-Algorithm is \[
{d\log\left(\frac{(1+\eta)^2}{(\delta-\eta)^2}+\epsilon\right)+\littleO{d}}\]  bits for any
$\epsilon > 0$, i.e., almost linear in the dimension of the lattice.  Moreover,
as discussed in~\cite{NguyenS09}, it appears that---even though this
bound can be shown to be sharp by specific examples---experiments indicate
that the number of bits required \emph{on average}
is, in fact, lower.

This phenomenom is well-known and is often used in existing algorithms
and softwares in the form of a compute-and-verify paradigm. For
example, this is default strategy of the well-known
FPLLL~\cite{fplll}. It relies on the fact that verifying that a
lattice basis is indeed reduced is much less costly than the reduction
itself, as shown in~\cite{Villard07}. In addition, it is necessary to
take several conservative measures in order to prevent the
implementation to enter potentially infinite loops.

The approach we propose deviates from this paradigm. Instead of
guaranteeing the end-result, we want to make sure that the whole
computation follows the mathematical definition of the
algorithm. With low-precision approximations, it is unclear how this
could be done. However, interval-arithmetic offers a neat solution to
achieve this goal.
 
\section{Interval Arithmetic and its certification property}
Interval arithmetic is a representation of reals by intervals that
contain them.  For instance, one can specify a value $x$
with an error $\epsilon$ by giving an interval of length $\epsilon$
containing $x$.  For example, the constant
$\pi$ can be represented with an error of~$10^{-2}$ by the interval
$[3.14, 3.15]$. Interval arithmetic is crucial in the context of
\emph{certified} numerical computations, where reals can only be
represented with finite precision. For more details, the interested
reader can consult an extensive reference, such
as~\cite{InterBook}.\medskip

In the following, we denote
by $\inter{x}$ a closed interval $[\inter{x}^-, \inter{x}^+]$.
We define its \emph{diameter} as the 
positive real $\inter{x}^+ - \inter{x}^-$ and its \emph{center} as the real 
$\frac{1}{2}(\inter{x}^+ + \inter{x}^-)$. 

Given a real-valued function $f(x_1,\ldots,x_n)$ an
interval-arithmetic realization of $f$ is an interval-valued function
$F$ such that the interval $F(\inter{x_1}, \ldots, \inter{x_n})$
contains all the values $f(x_1,\ldots,x_n)$ for $(x_1,\ldots,x_n)$ in
$\inter{x_1}\times \cdots \times \inter{x_n}$.

If $F$ always returns the smallest possible interval, it is called a
\emph{tight} realization, otherwise it is called \emph{loose}. In
practice, tight realizations can only be achieved in very simple
specific cases. However, even a loose realization can suffice to
certify the correctness of a computation.

Another important property of interval arithmetic is that it can be
used to compare numbers in a certified way, as long as the intervals
that represent them are disjoint.

\subsection{Some useful interval-arithmetic realizations}
\subsubsection{Integral representation of fixed length}
\label{sec:intgral_representation}
A first convenient way to represent reals at finite precision is to
use integers as an approximate representation.

\begin{definition}[Integral representation of reals]
  \label{def:integral_representation}
  Let $x\in\RR$ be an arbitrary real number and $n\geq 0$ a non-negative integer.  
  Define \emph{an integral representation at accuracy\footnote{
      We use here the denomination of ``accuracy'' instead of
      ``precision'' to avoid 
      confusions with the floating-point precision as defined in 
    paragraph~\ref{sec:fp_representation}.}~
  $n$} as an interval of diameter $2$:
  \[
    \inter{x}_n = \left[X_n-1, X_n+1 \right]
  \]
together with a guarantee that $2^nx$ belongs to $\inter{x}_n$.
\end{definition}

This representation is very compact, since it only requires to store
the center $X_n$ of the interval using $n+\lceil\log x\rceil$
bits. However, computing with this form of representation is not
convenient. As a consequence, we only use it to represent immutable
values and we convert to a different representation for
computations. The reason for using the interval
$\left[X_n-1, X_n+1 \right]$ of diameter $2$ rather than
$\left[X_n-1/2, X_n+1/2 \right]$ (of diameter $1$) is that when
$2^{n}x$ is very close to a half-integer, it remains possible to
easily provide a valid value for $X_n$ without computing extraneous
bits of the representation of $x$.

\subsubsection{Fixed-point representations}
In the context of lattice reduction, it is useful to compute linear
combinations with exact integral coefficients. In order to do that
with approximate values initially given by centered integral
representation, it is possible to use a fixed-point representation. 

\begin{definition}[Fixed point representation of reals]
  \label{def:fixed_point_representation}
  Let $x\in\RR$ be an arbitrary real number and $n\geq 0$ a
  non-negative integer.  Define a \emph{fixed-point representation at
    accuracy $n$} of radius $\delta$ as an interval:
  \[
    \inter{x}_n = \left[X_n-\delta, X_n+\delta \right]
  \]
together with a guarantee that $2^nx$ belongs to $\inter{x}_n$.
\end{definition}

It is easy to add or subtract such intervals by doing the computation
on the center and by adding the two radii. It is also easy to multiply
by an exact integer by multiplying the center by the integer and the
radius by its absolute value. Integral representations are a special
case of fixed-point representations, with radius equal to $1$.

\subsubsection{Floating-point representation.}
\label{sec:fp_representation}
Another way to handle real values is to use floating point
representations of the two bounds of each interval. For example, if we
denote by $ \lfloor x\rfloor_n$ and $\lceil x \rceil_n$ respectively
the largest floating-point number below $x$ and the lowest
floating-point number above $x$ written with $n$ bits, the tightest
floating-point representation of $x$ with $n$ bits of precision is the
interval
$I_n(x) = \left[ \lfloor x\rfloor_n, \lceil x \rceil_n\right]$.

With such a representation, it becomes possible to create a
realization of the elementary operations by using careful rounding
when computing approximations of the bounds of the resulting interval,
as shown in Figure~\ref{inter_arith}. When speaking of the precision
of such a representation, we simply refer to the common floating-point
precision of the upper and lower bounds.

Once the elementary operations are available, they can be used to
implement certified versions of any function that can classically be
computed with floating point arithmetic. 

\begin{figure}
\small \[
  \begin{aligned}
    \left[\inter{x}^-, \inter{x}^+\right] + \left[\inter{y}^-, 
    \inter{y}^+\right] &= \left[\inter{x}^-+^-\inter{y}^-, \inter{x}^++^+ 
    \inter{y}^+\right]         \\
    \left[\inter{x}^-, \inter{x}^+\right] - \left[\inter{y}^-, 
    \inter{y}^+\right] &= \left[\inter{x}^--^-\inter{y}^-, \inter{x}^+-^+ 
    \inter{y}^+\right]         \\
    \left[\inter{x}^-, \inter{x}^+\right] \times \left[\inter{y}^-, 
    \inter{y}^+\right] &= \left[\textrm{min}^-(\rho), \textrm{max}^+ (\rho) 
  \right] \qquad \text{where} \quad \rho =  \inter{x}^-\inter{y}^-,
    \inter{x}^+\inter{y}^-, \inter{x}^-\inter{y}^+, \inter{x}^+\inter{y}^+        
    \\
    \left[\inter{x}^-, \inter{x}^+\right]^{-1} &=
    \left[\textrm{min}^-\left(\frac{1}{\inter{x}^+}, \frac{1}{\inter{x}^-}\right),
    \textrm{max}^+\left(\frac{1}{\inter{x}^+},
    \frac{1}{\inter{x}^-}\right)\right]
  \end{aligned}
\]
\hrulefill\\
\emph{\small{$+^+$, $+^-$ are here respectively the $+$ operator with
   rounding up or down.\\
The same goes for the $-^+,-^-, \min^-, \max^+$ 
operators.}}
\caption{Basic arithmetic operators in Interval Arithmetic}
\label{inter_arith}
\end{figure}

\section{Approximate lattices}
The need to reduce lattices given by approximations, especially for
number-theoretic applications as been known for long. In particular,
Buchmann gives in~\cite{Buchmann94} a bound on the required precision
to achieve this goal by using a direct approximation of the input
basis. However, this bound is computed in terms of a quantity called
the defect that can be very large and also involves the first minimum
of the lattice.

Using interval arithmetic, it becomes possible to get finer control on
the precision required to perform the lattice reduction, even with
approximate lattices.

\subsection{Approximate representation of a positive-definite matrix}
\label{sec:matrix_representation}
A matrix with real entries  can easily be represented with the integral
representation from Definition~\ref{def:integral_representation},
using the same accuracy for all of its entries.

\begin{definition}[Matrix integral representation]
\label{def:matintrep}
  Let $A=(a_{i,j})_{i,j}\in\RR^{d\times d}$ 
  be an arbitrary real matrix of dimension $d$ and $n>0$ be a fixed positive integer.
  A matrix of intervals 
  \[ \inter{A}_n = 
      (\inter{a_{i,j}}_n)_{(i,j)\in [1\,\cdots\,d]^2},
  \]
where each $\inter{a_{i,j}}_n$ is an integral representation of
$a_{i,j}$  is said to \emph{integrally represent~$A$ at accuracy
  $n$}.
\end{definition}

We may omit the subscript $n$ when the accuracy is clear from the 
context. Given a matrix $A$,  and a matrix $B\in\inter{A}_n$, there
exists a unique $d\times d$ matrix $\Delta$ with entries in
$[-2, 2]$ such that $B= 2^n A+\Delta$.

In particular, we may apply this representation to symmetric
matrices. In that case, we obtain the following useful lemma:

\begin{lemma}
  \label{lem:minimal_eigenvalue}
    Let $S=(s_{i,j})_{i,j}\in\mathcal{S}_d(\RR)$ be a symmetric
    matrix of dimension $d$ and $\inter{S}_n$ an integral
    representation of $S$ at accuracy $n$. Then, for any symmetric
    matrix~$S'$ in $\inter{S}_n$, we have:
    $$2^n\lambda_d(S)-2d\leq \lambda_d(S') \leq 2^n\lambda_d(S)+2d,$$
    where $\lambda_d(T)$ denotes the smallest eigenvalue of a $d$-dimensional
    symmetric matrix~$T$.
\end{lemma}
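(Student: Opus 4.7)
The plan is to realize $S'$ as a bounded additive perturbation of $2^n S$ and then invoke a classical spectral stability result. First, I would use the observation recorded just before the lemma: any $S' \in \inter{S}_n$ can be written uniquely as $S' = 2^n S + \Delta$, where $\Delta$ is a $d\times d$ matrix with entries in $[-2,2]$. Since both $S$ and $S'$ are symmetric, the perturbation $\Delta$ lies in $\Sym_d(\RR)$ as well.

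The core of the argument is Weyl's perturbation inequality for symmetric matrices: for any $A, B \in \Sym_d(\RR)$ and any index $k$,
\[
  |\lambda_k(A) - \lambda_k(B)| \leq \|A - B\|_{\textrm{op}},
\]
where $\|\cdot\|_{\textrm{op}}$ denotes the spectral (operator) norm. Applied with $A = S'$, $B = 2^n S$ and $k = d$, and using that $\lambda_d(2^n S) = 2^n \lambda_d(S)$ because $2^n>0$ merely rescales the spectrum, we obtain
\[
  |\lambda_d(S') - 2^n \lambda_d(S)| \leq \|\Delta\|_{\textrm{op}}.
\]

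It remains to bound $\|\Delta\|_{\textrm{op}}$ in terms of the entrywise bound on $\Delta$. The crudest estimate suffices: since every coefficient of $\Delta$ has absolute value at most $2$, its Frobenius norm satisfies $\|\Delta\|_F \leq \sqrt{d^2\cdot 4} = 2d$, and the operator norm is always dominated by the Frobenius norm. Hence $\|\Delta\|_{\textrm{op}} \leq 2d$, which combined with the previous display yields exactly the announced two-sided inequality.

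Conceptually there is no real obstacle. The only non-bookkeeping input is Weyl's inequality, which is an immediate consequence of the Courant--Fischer min--max characterization of eigenvalues of symmetric matrices. If a tighter constant were desired, one could replace the Frobenius-norm bound by $\|\Delta\|_{\textrm{op}} \leq \max_i \sum_j |\Delta_{i,j}|$ (valid for symmetric $\Delta$ via $\|\cdot\|_{\textrm{op}} \leq \sqrt{\|\cdot\|_1 \|\cdot\|_\infty}$), but this does not change the order of magnitude of the estimate, and the constant $2d$ is already what the statement requires.
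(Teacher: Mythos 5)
Your proof is correct and follows essentially the same route as the paper's: both decompose $S'=2^nS+\Delta$ with $\Delta$ symmetric and entrywise bounded by $2$, and both invoke Weyl's inequality together with the observation that the spectrum of $\Delta$ lies in $[-2d,2d]$. Your Frobenius-norm bound on $\|\Delta\|_{\mathrm{op}}$ is just one concrete way of justifying that last observation.
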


\begin{proof}
  This is a direct consequence of Weyl's inequalities for Hermitian
  matrices and of the relation $S'=2^nS+\Delta$, where $\Delta$ is
  real symmetric with entries in $[-2,2]$. Note that the eigenvalues
  of $\Delta$ all belong to $[-2d,2d]$.
\end{proof}

\subsection{Representation of lattices}
\label{sec:approx_representation}

In order to represent arbitrary lattices, we first need a description
of their ambient space. We simply describe the ambient space $V$ of
dimension $d$  by providing a basis $\gamma = 
(\gamma_1, \ldots, \gamma_d)$. Then, the scalar product
$\inner{\cdot}{\cdot}$ on $V$ can be encoded by a Gram matrix 
$\Gram_\gamma =  \left(\inner{\gamma_i}{\gamma_j}\right)_{(i,j)\in [1\,\cdots\,d]^2}$.

When the Gram matrix $\Gram_\gamma$ is integral, this already is a
standard description of the lattice $\Gamma$ spanned by $\gamma$. This
representation appears in particular
in~\cite[Proposition~2.5.3]{Cohen93}. We now extend this in order to
represent bases and generating families of arbitrary sublattices of
$\Gamma$.  Let $\Lat$ be a rank $r\leq d$ sublattice of $\Gamma$ given
by a generating family $\ell = \left(\ell_1, \ldots, \ell_p
\right)$. Since any vector in $\ell$ belongs to $\Gamma$, it can be
expressed with integral coordinates in the basis $\gamma$. As a
consequence, we can represent $\ell$ by a $p \times d$ integral matrix
$L$. Moreover, the knowledge of $\Gram_\gamma$ allows us to easily
compute the scalar product of any pair of vectors in $\Lat$.

All this leads to the following definition:
\begin{definition}[Approximate representation of a lattice]
  \label{def:approx_representation}
  Let $\Gram_\gamma$ and $L$ be as above and $n$ be a non-negative integer. Denote
  by $G$ the matrix of centers of an integral representation 
  $\inter{\Gram_\gamma}_n $at accuracy $n$ of the Gram matrix $\Gram_\gamma$.
  Then the pair $(G, L)\in\ZZ^{d\times d}\times\ZZ^{p\times d}$ of integral 
  matrices is said to \emph{represent at accuracy $n$} the lattice $\Lambda$ in 
  the basis $\gamma$ of $\Gamma$.
\end{definition} 

\subsubsection{Computation of the inner product in Interval Arithmetic.}
\label{sec:inner_product_computation}
Let $a$ and $b$ be two vectors of $\Lat$ described by their vectors $A$
and $B$ of coordinates in the basis $\gamma$. We know that:
$$
\inner{a}{b} =  A^T\cdot \Gram_\gamma \cdot B.
$$

Thus:
$$
2^n \inner{a}{b} = A^T\cdot G \cdot B + A^T\cdot \Delta \cdot B,
~\mbox{where}~
|A^T\cdot \Delta \cdot B| \leq \left(\sum_i |A_i|\right)\left(\sum_i |B_i|\right).
$$
This directly gives an interval representation of $\inner{a}{b}$.

\subsection{Lattice reduction of approximate lattices}
\label{sec:computational_setting}

Suppose now that the Gram matrix
$\Gram_\gamma = \left(\inner{\gamma_i}{\gamma_j}\right)_{(i,j)\in [1\,\cdots\,d]^2}$ 
representing the inner product of the ambient space
$\Gamma \otimes_\ZZ \RR$ in the basis $\gamma$ is given indirectly by
an algorithm or an oracle $\mathcal{O}_\gamma$ that can compute each
entry at any desired accuracy. We can restate the definition of a 
reduced basis in this framework as:

\begin{definition}[$(\delta, \eta)$-\LLL~reduction]
  \label{def:LLL_framework}
  Let $(\delta, \eta)$ be admissible \LLL~parameters. Given an
  integral matrix $L \in \ZZ^{p\times d}$ which describes the vectors
  of a basis of a lattice $\Lat$ in the basis $\gamma$, we
  say that $(\Gram_\gamma,L)$ is a $(\delta,\eta)$-\LLL~reduced basis 
  of $\Lat$ if and only if there
  exists an $n_0>0$ such that for any $n\geq n_0$ there exists a pair
  $({\inter G}_n,L)$, where $ {\inter G}_n$ is an integral representation of
  $\Gram_\gamma$ at accuracy $n$, which is a $(\delta,\eta)$-\LLL~reduced
  basis.
\end{definition}

The computational problem associated with reduction theory can then be 
written as:

\begin{problem}[Lattice Reduction for approximate representation]
  Let $\delta, \eta$ be admissible \LLL~parameters. Given as input an
  algorithm or oracle to compute $\Gram_\gamma$ at arbitrary precision
  and an integral matrix $L \in \ZZ^{p\times d}$ that describes the
  vectors of a generating family of a lattice $\Lat$ in the basis
  $\gamma$: find a basis $L'$ of $\Lat$ such that $(\Gram_\gamma,L')$
  is a $(\delta,\eta)$-\LLL~reduced basis in the sense of
  Definition~\ref{def:LLL_framework}.
\end{problem}

Note that using interval arithmetic it suffices to check the
$(\delta,\eta)$-\LLL~reduction condition at accuracy $n_0$ to be sure
it holds at any larger accuracy. Indeed, an integral representation
that satisfies the condition can be refined into a more precise
integral representation by scaling up the integer representing the
center by an adequate power of two. This refined representation
continues to satisfy the condition.

\subsubsection{Accuracy of representation and space complexity.}
\label{sec:optimization_space}
Let $({\inter G}_n,L)$ be an integral representation of $\Lat$, at accuracy $n$.
Then, the magnitude of the entries of $G$ is $2^n$ times the magnitude of 
the entries of $\Gram_\gamma$. Thus, ${\inter G}_n$ can be encoded
using 
$\bigO{d^2(n+\log \|\Gram_\gamma\|_\textrm{max})}$ bits.

\section{Generalized LLL~reduction with Interval Arithmetic}
In this Section, we adapt lattice reduction algorithms to our
setting. More precisely, we represent the information related to
Gram-Schmidt vectors by interval arithmetic using a floating-point
representation as described in
Section~\ref{sec:fp_representation}. For the representation of the
lattice itself, we consider two cases: either the underlying Gram
matrix is integral, or it is given by an approximate integral
representation as in Section~\ref{sec:matrix_representation}.  In the
latter case, our algorithm also asks for representations with higher
accuracy until it is sufficient to yield a reduced basis for the given
lattice. The canonical case with the standard Euclidean scalar product
is achieved by setting the Gram matrix to the (exact) identity matrix.

\subsection{Interval Arithmetic $L^2$ reduction with fixed precision.}
\label{sec:algorithm}

We first consider the simplified case where the lattice representation
is fixed. It can be either exact or approximate with a given
accuracy. In both cases, we fix a basis
$\gamma = (\gamma_1, \ldots, \gamma_d)$ and a representation of a
lattice $\Lat$ in this basis. It is respectively an exact integral
representation $(G,L)$ or an approximate representation
$({\inter G}_n,L)$ at accuracy $n$ of $(\Gram_{\gamma},L)$.

\subsubsection{Using Interval Arithmetic in \LLL.}
We now modify the \LL~algorithm of~\cite{NguyenS09} in a few relevant
places to make use of interval arithmetic instead of floating-point
arithmetic for the Gram-Schmidt-related values.  Since the description
of the lattice $\Lat$ is already using intervals, it seems natural to
use interval arithmetic in the lattice reduction algorithm. For
completeness, when the input Gram matrix is exact, we make the updates
to the Gram-Schmidt orthogonalized matrix used by \LLL{} explicit in
the algorithm (except the simple displacements).  This also emphasizes
a subtle difference with the case of an approximate input Gram
matrix. Indeed, in that case, we update the \GSO-values but recompute
the errors rather than relying on the interval arithmetic to do
it. This is important to gain a fine control on the error growth
during updates.

In addition, when using the technique from~\cite{Pohst87} to be able
to deal with lattices given by a generating family instead of a basis,
we make a slightly different choice than in~\cite{NguyenS09}. Instead
of moving the zero vectors that are encountered during the computation
during the reduction to the start of the basis, we simply remove them.
Note that with an approximate matrix, if we discover a non-zero vector
whose length is given by an interval containing $0$, it is not
possible to continue the computation. This means that the accuracy of
the input is insufficient and we abort. The core modification with
interval arithmetic appears while testing the Lovász condition. If it
is not possible to decide whether the test is true or false because
of interval overlap, we also abort due to lack of precision. To be
more precise, when testing the Lovász condition, we also need to check
that the corresponding $\mu$ coefficient is indeed smaller than
$\eta$. The reason for this is that, when called with insufficient
precision, the Lazy reduction routine may fail to ensure that
property.

In addition, if a negative number occurs when computing the norm of a
vector, it means that the given Gram matrix is not positive-definite
and the algorithm returns an error accordingly.

\begin{algorithm}
  \BlankLine
  \KwIn{
    Initial basis $L = ({L_1}, \ldots, {L_d})$,
    precomputed (internal) Gram matrix $Gram$, interval matrices
  $\inter{R}$ and $\inter{M}$, an integer $1\leq k \leq d$.}
  \KwResult{Size-reduce the $k$-th vector of $L$ and update the
  Gram matrix $Gram$.}
  \BlankLine
  done $\gets \False$\;
  \While{{\rm done} $= \False$}{
    \For{$j=1$ \KwTo $k-1$}{
      $\inter{R_{k,j}} \gets \textsc{ConvertToFPinterval}({Gram_{k,j}})$\;
      \lFor{$i =1$ \KwTo $j-1$}{
        $\inter{R_{k,j}} \gets \inter{R_{k,j}} -
        \inter{M_{j,i}}\inter{R_{k,i}}$ }
      $\inter{M_{k,j}} \gets \inter{R_{k,j}}/\inter{R_{j,j}}$;
    }
    $\inter{{s}_1^{(k)}} \gets  \textsc{ConvertToFPinterval}({Gram_{k,k}})$\;
    \lFor{$j=2$ \KwTo $k$}{
      $\inter{{s}_j^{(k)}} \gets \inter{s_{j-1}^{(k)}}
      - \inter{{M}_{k,j-1}}\cdot \inter{R_{k,j-1}}$ }
    $\inter{R_{k,k}} \gets \inter{s_k^{(k)}}$\;
    $\inter{\tau} \gets (\max_{j<k} \inter{M_{k,j}})$\;
    ret $\gets (\inter{\tau} \leq \eta)$\;
    \lIf{ret $\neq \False$}{ done $\gets \True$}
    \Else{
      \For{$i=k-1$ \Downto $1$}{
        $X_i \gets$ $\eta$-\textsc{IntervalClosestInteger}$(\inter{M_{k,i}})$\;
        \lFor{$j=1$ \To $i-1$}{
          $\inter{M_{k,j}} \gets \inter{M_{k,j}} -
          X_i\inter{M_{i,j}}$}
        $L_k \gets L_k - X_i L_i$\;
        \tcp{Update the Gram matrix accordingly}
        ${Gram_{k,k}} \gets {Gram_{k,k}} - 2X_i {Gram_{k,i}}+X_i^2 {Gram_{i,i}}$\;
        \lFor{$j=1$ \To $i$}{${Gram_{k,j}} \gets {Gram_{k,j}} - X_i {Gram_{i,j}}$}
        \lFor{$j=i+1$ \To $k-1$}{${Gram_{k,j}} \gets {Gram_{k,j}} - X_i {Gram_{j,i}}$}
        \lFor{$j=k+1$ \To $d$}{${Gram_{j,k}} \gets
          {Gram_{j,k}} - X_i {Gram_{j,i}}$}
      }
    }
  }
  \caption{The (interval) lazy size reduction algorithm, $\eta$-\textsc{ILazyRed}.}
  \label{inter_SredLazy}
\end{algorithm}
\begin{algorithm}
  \BlankLine
  \Parameter{ $\delta \in (1/4,1), \eta \in (1/2, \sqrt{\delta})$
    admissible
  \LLL~parameters, $\ell\in\NN$ the internal precision used for
  floating-point
  representation.}
  \KwIn{Exact representation $(G,L)$ or approximate representation
    $(\inter{G_n},L)$ of a lattice given by $p$  generating vectors in dimension $d$.}
  \KwResult{A $(\delta,\eta)$ \LLL-reduced basis $L'$ (with $\dim(L)$ vectors).}
  \BlankLine
  $k \gets 2$ \;
  \tcp{Compute the Gram matrix of the basis represented by $L$}
    \For{$i=1$ \To $p$ {\bf for} $j=1$ \To $i$}{
      \lIf{Exact}{$GramL_{i,j} \gets L_i^T G L_j$}
      \lElse{
      $GramL_{i,j} \gets $ Interval of center  $L_i^T G_n L_j$
      and radius $\|L_i\|_1\|L_j\|_1$
}}

  $\inter{R_{1,1}} \gets \textsc{ConvertToFPinterval}(GramL_{1,1})$\;
  \While{$k \leq p$ }{
    \tcp{Size-reduce $L_k$ with interval on the family
    $(L_{1},\ldots, L_{k-1})$}
  $\eta$-\textsc{ILazyRed}(k, {\it Exact})\;
      {{\bf if} {\it Exact = \False} {\bf then} \For{$j=1$ \To $k$}{
    Update radius of $GramL_{k,j}$ to $\|L_k\|_1\|L_j\|_1$ (rounded up
  with $\ell$ significant bits)}}
    $k' \gets k$\;
    \While{$k \geq 2$}{
     ret $\gets \left(\inter{M_{k',k-1}}\leq \eta\right)\mbox{~and~}\left(\inter{\delta} \cdot  \inter{R_{k-1,k-1}} >
      \inter{s_{k-1}^{(k')}}\right)$\;
      \lIf{ret = \True}{$k \gets k-1$}
      \lElseIf{ret = \False}{\Break}
      \lElse{
       \Return ErrorPrecision
      }
    }

    \uIf{$k\neq k'$}{
      \lFor{$i=1$ \KwTo $k-1$}{
        $\inter{M_{k,i}} \gets \inter{M_{k',i}}$;
        $\inter{R_{k,i}} \gets \inter{R_{k',i}}$
      }
      $\inter{R_{k,k}} \gets \inter{s_{k}^{k'}}$;
      $L_{tmp} \gets L_{k'}$;
      \lFor{$i=k'$ \Downto $k+1$}{
        $L_{i} \gets L_{i-1}$
      }
      $L_{k} \gets L_{tmp}$;  Move values in ${GramL}$ accordingly\;
    }
    \Else{
      $\inter{R_{k,k}} \gets \inter{s_{k}^{(k')}}$\;
      \lIf{$0 \in \inter{R_{k,k}}$ and $L_k \neq 0$}{
          \Return ErrorAccuracy
      }
      \lIf{$\inter{R_{k,k}}<0$}{
          \Return ErrorNonPosDefinite
      }

    }

    \If{$L_k = 0$}{
      \lFor{$i=k$ \To $p-1$}{ $L_i \gets L_{i+1}$ }
      $p \gets p-1$; $k \gets k-1$; Move values in ${GramL}$ accordingly\;
    }
    $k\gets \max(k+1, 2)$\;
  }
  \Return ($L$)
  \caption{The \TLL~Algorithm.}
  \label{alg:LL_tilde}
\end{algorithm}

\subsubsection{Internal precision in the exact-input case.}
\label{sec:master_precision}

For the classical \LL~algorithm, Section~\ref{LLprecision} states that
the precision that is needed for the computations only depends on the
dimension of the lattice. It is natural to ask a similar question
about the algorithm \TLL: can the required internal accuracy be
bounded independently of the entries appearing in the matrices
$G$ and $L$. When $G$ is exact, i.e., integral, the adaptation is
straightforward and we obtain the following result.

\begin{theorem}
  \label{thm:master_precision}
  Let $(\delta,\eta)$ be admissible \LLL{} parameters. Let
  $c>\log\frac{(1+\eta)^2}{\delta-\eta^2}$ and let
  $(\Lambda, \inner{\cdot}{\cdot})$ denote a rank-$d$ lattice, exactly
  described by the pair $(G,L)$. Let $B$ denotes the maximum
  entry in absolute value in $L^TGL$. Then, the
  \TLL~of Figure~\ref{alg:LL_tilde} used with $\ell=cd+\littleO{d}$ outputs
  a $(\delta,\eta)$-{\LLL}-reduced basis
 in time $\bigO{d^3\log{B}(d+\log{B})\Mult{d}}.$ Furthermore, if $\tau$
 denotes the number of main loop iterations, the running time is
$\bigO{d(\tau+d\log{dB})(d+\log{B})\Mult{d}}.$
\end{theorem}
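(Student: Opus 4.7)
The plan is to adapt the precision and complexity analyses of the classical \LL~algorithm to the interval-arithmetic setting. Structurally, \TLL~performs exactly the same sequence of size reductions, Lov\'asz tests and swaps as \LL; only the \GSO-related values are now carried in intervals, and every comparison is adjudicated by a certified interval test that either returns a definite answer or aborts. Correctness of any returned basis is then immediate from the interval certification property of Section~3, so the bulk of the argument reduces to (i) bounding the number of main-loop iterations, (ii) proving that the internal precision $\ell = cd + \littleO{d}$ suffices so that no \textsc{ErrorPrecision} branch is ever taken, and (iii) bounding the cost per iteration.

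For (i), termination and the iteration bound follow from the classical potential argument of~\cite{LLL82}. Since $G$ is integral and the updates to $L$ are unimodular on the integer coefficient side, the running Gram matrix $L^T G L$ stays integer-valued, and the product of its leading principal minors equals $\prod_{i=1}^d \|\pi_i(b_i)\|^{2(d-i+1)}$, a positive integer that decreases by a factor $\geq \delta^{-1}$ at each swap and is initially at most $B^{\bigO{d^2}}$. Hence the number of main-loop iterations is $\tau = \bigO{d^2 \log B}$. A standard bound (using Proposition~\ref{prop:norm_bound}) further shows that all entries of the running $L^T G L$ keep bit length $\bigO{d + \log B}$ throughout.

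Step (ii) is the main technical obstacle. I would follow the error-propagation framework of~\cite{NguyenS09} but adapted to interval diameters rather than one-sided floating-point errors. The diameter of $\inter{R_{k,j}}$ produced by the recurrence
\[
  \inter{R_{k,j}} \gets \textsc{ConvertToFPinterval}(Gram_{k,j}) - \sum_{i<j} \inter{M_{j,i}}\inter{R_{k,i}}
\]
is controlled by $\sum_i |\inter{M_{j,i}}| \operatorname{diam}(\inter{R_{k,i}})$ plus a rounding increment of order $2^{-\ell}|\inter{R_{k,j}}|$. On the already-reduced prefix of the basis, $|\inter{M_{j,i}}| \leq \eta + \littleO{1}$ by the invariant maintained by \textsc{ILazyRed}, so iterating up to index $k \leq d$ produces a geometric amplification bounded by $\left(\frac{(1+\eta)^2}{\delta - \eta^2}\right)^{\bigO{d}}$, which is absorbed by the choice $\ell = cd + \littleO{d}$ as soon as $c > \log\frac{(1+\eta)^2}{\delta - \eta^2}$. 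Consequently the two critical predicates, $\inter{M_{k',k-1}} \leq \eta$ and $\inter{\delta}\,\inter{R_{k-1,k-1}} > \inter{s_{k-1}^{(k')}}$, are always strictly decidable whenever their exact counterparts are, so the \textsc{ErrorPrecision} and spurious \textsc{ErrorNonPosDefinite} branches are unreachable.

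For (iii), each iteration consists of $\bigO{d}$ interval operations at precision $\ell = \bigO{d}$, costing $\bigO{\Mult{d}}$ each, together with $\bigO{d}$ updates of the exact integer Gram matrix on entries of bit length $\bigO{d + \log B}$, costing $\bigO{\Mult{d + \log B}}$ each. Using $\Mult{d + \log B} = \bigO{\Mult{d}(1 + \log B / d)}$, the per-iteration cost is $\bigO{d(d + \log B) \Mult{d}}$. Summing over $\tau$ iterations and adding the upfront cost $\bigO{d^3 \Mult{\log B}}$ for the initial computation of $L^T G L$ — which is absorbed in the additive $d \log(dB)$ term — yields the refined bound $\bigO{d(\tau + d\log(dB))(d + \log B)\Mult{d}}$; substituting $\tau = \bigO{d^2 \log B}$ gives the coarser $\bigO{d^3 \log B (d + \log B)\Mult{d}}$ announced in the statement.
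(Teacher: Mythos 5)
Your proposal is correct and takes essentially the same route as the paper, which "proves" this theorem only by declaring the adaptation of the \LL{} analysis straightforward and importing the explicit precision bound $\ell \geq 10+2\log_2 d-\log_2(\eta-1/2)+d(C+\log_2\rho)$, $\rho=\frac{(1+\eta)^2+\epsilon}{\delta-\eta^2}$, from~\cite{NguyenS09}; your potential argument, diameter-propagation bound and per-iteration accounting are precisely the ingredients being imported (modulo the minor miscount of $\bigO{d}$ rather than $\bigO{d^2}$ interval operations per lazy-reduction pass, which does not affect your final per-iteration bound). The one claim you should not state as you do is that the interval predicates are ``always strictly decidable'' so that \textsc{ErrorPrecision} is unreachable: an interval comparison is undecidable at every finite precision when the exact quantities tie or nearly tie, and the faithful port of the Nguyen--Stehl\'e argument is that a decidable test certifies a slightly relaxed exact condition while near-ties must be resolved by collapsing the undecidable outcome into one of the two branches --- a subtlety the paper itself also leaves implicit.
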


In fact, the bound on $\ell$ is made explicit
in~\cite{NguyenS09}. More precisely, it states that for any arbitrary
$C>0$ and an $\epsilon\in]0,1/2]$, it suffices to have:
$$
\ell \geq
10+2\log_2{d}-\log_2{\min(\epsilon,\eta-1/2)}+d(C+\log_2\rho) \quad\mbox{where~}\rho=\frac{(1+\eta)^2+\epsilon}{\delta-\eta^2}.
$$
For example, choosing $C=\epsilon=\eta-1/2$ it suffices to have:
$$
\ell \geq T(d,\delta,\eta)=
10+2\log_2{d}-\log_2{(\eta-1/2)}+(\eta-1/2+\log_2\rho)\,d.
$$
When $\delta$ is close to $1$ and $\eta$ to $1/2$, the constant before
$d$ becomes smaller than $1.6$.

\subsubsection{Dealing with approximate inputs.}
When dealing with lattices given in an approximate form, i.e., by a
representation $({\inter G}_n,L)$ at accuracy $n$ of
$(\Gram_{\gamma},L)$, the analysis of the algorithms differs in three
main places:
\begin{itemize}
\item When bounding the number of rounds $\tau$, we can no longer
  assume that the potential is an integer. As a consequence, in order
  to keep a polynomial bound on $\tau$, we need to provide a lower
  bound on the possible values of the potential, rather than rely on
  the trivial lower bound of $1$ for an integral-valued potential.
\item Since the notion of {\LLL}-reduction is only well-defined for a
  positive definite $G$, we need to make sure that ${\inter G}_n$ is
  positive-definite during the algorithm. Otherwise, it should output
  an error; Algorithm~\ref{alg:LL_tilde} returns an error that
  ${\inter G}_n$ is incorrect whenever it encounters a vector with a
  negative norm.
\item When ${\inter G}_n$ is approximate, the scalar products between
  lattice vectors can no longer be exactly computed. Thus, we need to
  able to make sure that the errors are small enough to be compatible
  with the inner precision used for Gram-Schmidt values. At first
  glance, this might seem easy. However, when using update formulas to
  avoid recomputation of scalar products, the estimates on errors
  provided by interval arithmetic can grow quite quickly. In fact,it
  would prevent the update strategy from working. The key insight is
  to remark that since the centers of the intervals are represented by
  integers, any computation on them is exact and we can use update
  formulas to compute them. However, it is essential to recompute the
  radii of the intervals, i.e., the errors, to prevent them from
  growing too quickly.
\end{itemize}

\subsubsection*{Number of rounds}
Since interval arithmetic allows up to emulate exact computations as
long as no failures are detected, we can analyze the number of rounds
by assuming that all computations on non-integral values are done
using an exact arithmetic oracle. In this context, the number of
rounds can be studied by considering the potential as usual. Remember that
the initial setting where \LLL~operates on a basis the potential is
defined as
$$
D(B)=\prod_{i=1}^{d}\Vol{B_{[1\ldots i]}}.
$$
The key argument is that it decreases by a multiplicative
factor whenever an exchange is performed.

However, in our context, the starting upper bound and the ending lower
bound are different from the integer lattice setting. The initial
upper bound needs to account from the presence of the positive
definite matrix. So if the lattice is described by a pair $(\Gram_{\gamma},L)$ the
upper bound becomes:
$$
D(B)^2 \leq \left(d^2 \|\Gram_{\gamma}\|_\textrm{max} \|L\|_\textrm{max}^2\right)^{d(d+1)/2}.
$$
More importantly, it is no longer possible to claim that the potential
is an integer. Instead, we derive a lower bound by considering the
smallest eigenvalue of $\Gram_{\gamma}$ and find:
$$
D(B)^2\geq \lambda_d(\Gram_{\gamma})^{d(d+1)/2}.
$$

As a consequence, if we let $\tau$ denote the number of rounds of
the algorithm, we can conclude that:
$$
\tau \leq \bigO{d^2 \left( \log(\|L\|_\textrm{max} \right)
+ \log\left(\|\Gram_{\gamma}\|_\textrm{max}/\lambda_d(\Gram_{\gamma}) \right)+\log(d)}.
$$

When the lattice is given by a generating family $L$ rather than a
basis $B$,
we need a slightly different invariant. Following~\cite{NguyenS09}, we
define $d_i$ to be the product of the first $i$ non-zero values
$\|b^*_j\|$. Note that they are not necessarily consecutive, since
zeroes may occur anywhere. We then let:
$$
D'(L)=\left(\prod_{i=1}^{\dim{L}}d_i\right) \cdot \left(\prod_{i, b^*_i=0}2^i\right).
$$
This generalized potential is needed for the proof of
Theorem~\ref{thm:master_approx}. Note that, for lattices given by a
basis, the two definitions coincide.

\subsubsection*{Necessary accuracy for the scalar products}
In order to preserve the correctness of the algorithm when computing
with internal precision $\ell$, we need to check that all conversions
of scalar product values, using the calls to
\textsc{ConvertToFPinterval} in Algorithms~\ref{inter_SredLazy} and
\ref{alg:LL_tilde}, have sufficient precision. For a pair of lattice
elements, described by vectors $L_i$ and $L_j$, the relative precision
on the value of their scalar product is:
$$
\frac{\|L_i\|_1\|L_j\|_1}{|L_i^T G_n L_j|}.
$$
When the vectors are close to orthogonal with respect to the scalar
product given by $G_n$, the error can be arbitrarily large.
However, by carefully following the analysis of Theorem~3
in~\cite[Section 4.1]{NguyenS09}, we can show that this Theorem
remains true in our context. This suffices to ensure the correctness
part of Theorem~5 of~\cite{NguyenS09}. The first check is to verify
that quantity called $err_1$ in the proof of the Theorem remains upper
bounded by $2^{-\ell}$. Since the value is defined as the error on the
scalar product of the vectors number $i$ and $1$ divided by the norm
of the first vector, we have:
$$
err_1 \leq \frac{\|L_i\|_1\|L_1\|_1}{|L_1^T G_n L_1|}\leq
\frac{\max_{i}{\|L_i\|_1^2}}{\lambda_d(G_n)}
\leq \frac{d\max_{i}{\|L_i\|^2}}{\lambda_d(G_n)}
\leq \frac{d\max_{i}{\|b_i\|^2}}{\lambda_d(G_n)^2}.
$$

Thus:
$$
err_1 \leq \frac{d^3 \|G_n\|_\textrm{max}\|L\|^2_\textrm{max}}{\lambda_d(G_n)^2} \leq
\frac{d^3 \left(2^n\|\Gram_{\gamma}\|_\textrm{max}+1\right)\|L\|^2_\textrm{max}}
{\left(2^n\lambda_d(\Gram_{\gamma})-2d\right)^2}.
$$

As a consequence, it suffices to have:
$$
n \geq \ell+\bigO{\log(\|L\|_\textrm{max})
+ \log\left(\|\Gram_{\gamma}\|_\textrm{max}/\lambda_d(\Gram_{\gamma})\right)
+\log(d)}.
$$

\subsubsection*{\LL with approximate inputs.}
To complete the above properties on the number of rounds and necessary
accuracy, it suffices to remark that the only additional line of code
in the approximate \LL is the recomputation of interval radii on
line~10. Since it suffices to know the $\ell$ high-order bits of the
values, this recomputation can fully be done using arithmetic on
$\ell$. Indeed, during the computations of $\|L_i\|_1$ no cancellation
occurs. As a consequence, we get the following adaptation of
Theorem~\ref{thm:master_precision}. For completeness, we give here the
case where the lattice is initially given by a generating family of
$p$ vectors, has rank $d$ and lives in an ambient space of dimension
$D$.
\begin{theorem}
  \label{thm:master_approx}
  Let $(\delta,\eta)$ be such that $1/4<\delta<1$ and
  $1/2<\eta<\sqrt{\delta}$. Let
  $c>\log\frac{(1+\eta)^2}{\delta-\eta^2}$.  Assume that we are given
  as input $(\Lambda, \inner{\cdot}{\cdot})$ a rank-$d$ lattice
  $(\Gram,L)$ described by $p\geq d$ generating vectors in a ambient
  space of dimension $D\geq d$. Further assume that it is
  approximately represented at accurary $N$ by the pair
  $(\inter{G_N},L)$ and  let $B$ denote the maximum entry in
  absolute value in $L^T\Gram L$.  Let $\ell=cd+\littleO{d}$ and
  $$ N \geq \ell + \log\left(B/\lambda_D(\Gram)\right)+\log(d).$$

  Then, the \TLL~of Figure~\ref{alg:LL_tilde} outputs
  a $(\delta,\eta)$-{\LLL}-reduced basis
  in time
  $$\bigO{DN\left(d^2N+p(p-d)\right)\Mult{d}}.$$
  Furthermore, if $\tau$
  denotes the number of main loop iterations, the running time is
  $\bigO{DN\left(dN+\tau\right)\Mult{d}}.$
\end{theorem}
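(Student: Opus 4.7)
The plan is to assemble the three ingredients that the authors sketch in the paragraphs preceding the statement: (i) a bound on the number $\tau$ of main loop iterations; (ii) a proof that internal precision $\ell$ and input accuracy $N$ are sufficient so that \TLL{} never returns \textsc{ErrorPrecision} or \textsc{ErrorAccuracy}; and (iii) a per-iteration cost estimate. The structure mirrors the proof of Theorem~\ref{thm:master_precision}, but each step must be adapted to the approximate setting.

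For (i), I would work with the generalized potential $D'(L) = \bigl(\prod_{i=1}^{\dim L} d_i\bigr)\cdot\bigl(\prod_{i:\,b_i^*=0} 2^i\bigr)$ introduced just before the statement. As in the integral case, $D'(L)$ drops by a factor of at least $\delta^{-1}$ at every exchange and is unchanged otherwise. The upper bound on the starting value, $D'(L)^2 \leq (d^2\|\Gram\|_{\max}\|L\|_{\max}^2)^{d(d+1)/2}$, is polynomial in $B$; the lower bound, replacing the trivial $\geq 1$ of the integral case, comes from Lemma~\ref{lem:minimal_eigenvalue} and gives $D'(L)^2 \geq \lambda_D(\Gram)^{d(d+1)/2}$. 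Taking logarithms yields $\tau = \bigO{d^2\log(B/\lambda_D(\Gram)) + d^2\log d}$, which feeds into the second form of the running-time bound once a per-iteration cost is established.

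For (ii), I would reproduce the precision analysis of Theorem~3 of~\cite{NguyenS09} with the modified scalar-product routine of Section~\ref{sec:inner_product_computation}. The critical relative error $err_1$ has already been bounded in the preceding paragraph by $d^3(2^N\|\Gram\|_{\max}+1)\|L\|_{\max}^2/(2^N\lambda_D(\Gram)-2d)^2$, which for $N$ as in the hypothesis is at most $2^{-\ell}$. This is exactly the precondition needed for the rest of~\cite{NguyenS09}'s analysis of the lazy reduction to go through unchanged, so the test on line~10 together with the Lovász test both execute with enough accuracy that the intervals never straddle the decision thresholds, except possibly by a gap that shrinks fast enough. Crucially, the radii of the interval Gram entries must be \emph{recomputed} (not propagated through update formulas): since the centers are integers and the updates on them are exact, only the radius $\|L_i\|_1\|L_j\|_1$ needs refreshing, and because no cancellation occurs inside $\|L_i\|_1$, this can be done at precision $\ell$ rather than $N$.

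For (iii), each main loop iteration performs $\bigO{d}$ calls inside $\eta$-\textsc{ILazyRed}, each consisting of $\bigO{d}$ interval operations at precision $\ell = \bigO{d}$ (cost $\Mult{d}$ per operation), together with at most $\bigO{p}$ updates of the integer lattice vectors in ambient dimension $D$; additionally the radius refresh on line~10 costs $\bigO{DN}$ bit-operations per interval. Summing and using $\ell = \bigO{d}$, $N = \bigO{N}$ gives per-iteration cost $\bigO{DN(dN)\Mult{d}}$ plus a uniform additive term $\bigO{DNp(p-d)\Mult{d}}$ that accounts for the flattening of the generating family to a basis. Multiplying by $\tau$ yields the second bound, and substituting the explicit estimate on $\tau$ from step (i) yields the first. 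The main obstacle will be step (ii): carefully interleaving exact integer updates of interval centers with the non-cumulative recomputation of interval radii, and verifying that the bound on $err_1$ propagates to every subsequent quantity the NguyenS09 analysis tracks, so that $\ell = cd + o(d)$ bits truly suffice even though the input is given only to accuracy $N$.
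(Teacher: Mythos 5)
Your proposal assembles exactly the three ingredients the paper itself uses to justify this theorem (the generalized-potential bound on $\tau$ with the eigenvalue lower bound replacing the trivial integral bound, the $err_1$ analysis transplanted from Theorem~3 of~\cite{NguyenS09} to show the stated $N$ suffices, and the observation that interval radii must be recomputed rather than propagated, at precision $\ell$ since no cancellation occurs in $\|L_i\|_1$); the paper gives no further proof beyond this sketch. Your approach is essentially identical, with only minor looseness in the final per-iteration cost bookkeeping, which the paper does not detail either.
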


\subsection{$L^2$ reduction with adaptive precision and accuracy.}

\subsubsection{Adaptive precision.}
\label{sec:adaptive_precision}

Since by construction the \TLL~Algorithm can detect that the choice
for internal precision $\ell$ is insufficient to correctly reduce the
lattice $\Lat$. The procedure can be wrapped in a loop that
geometrically increases precision $\ell$ after each unsuccessful
iteration. This yields an \emph{adaptive precision} reduction
algorithm~\ALLL.  Since the complexity of floating-point
multiplication is superlinear, the use of a geometric precision growth
guarantees that the total complexity of this lattice reduction is
asymptotically dominated by its final iteration.\footnote{In practice,
  for lattices of rank few hundreds it appears nonetheless that the
  computational cost of the previous iterations lies between $20\%$ and
  $40\%$ of the total cost.}

Moreover, the cost of operations in the floating-point realization of
interval arithmetic is at most four times the cost of floating-point
arithmetic at the same precision. Depending on the internal
representation used, this constant can even be improved.
As a consequence, for lattices that can be reduced with a low-enough
precision, it can be faster to use interval arithmetic than
floating-point arithmetic with the precision required by the bound
from Section~\ref{LLprecision}.

\subsubsection{Adaptive accuracy.}
We now turn to the setting of Section~\ref{sec:computational_setting},
where an algorithm or oracle $\mathcal{O}_\gamma$ can output
an integral representation of the
Gram matrix $\Gram_\gamma =
\left(\inner{\gamma_i}{\gamma_j}\right)_{(i,j)\in [1\,\cdots\, r]^2}$ at
arbitrary accuracy $n$. In that context, we need to determine both the
necessary accuracy and internal precision. When running
Algorithm~\ref{alg:LL_tilde} with some given accuracy and precision,
three outcomes are possible:
\begin{itemize}
  \item Either the reduction terminates in which case the lattice is
  \LLL-reduced, which implies that both accuracy and precision are
  sufficient.
  \item The Lovász condition fails to be tested correctly, which
  indicates an insufficient precision. In that case, we need to test
  whether the precision is lower than theoretical bound
  $T(d,\delta, \eta)$ given after Theorem~\ref{thm:master_precision}
  or not. In the latter case, we know that the accuracy needs to be
  increased.
  \item The algorithm detects a non-zero vector whose norm is given by
  an interval containing 0. This directly indicates insufficient
  accuracy.
\end{itemize}

Depending on the result of Algorithm~\ref{alg:LL_tilde}, we increase
the precision or the accuracy and restart. The corresponding
pseudo-code is given in Algorithm~\ref{alg:AdaptLLL}. Since the
precision and accuracy both follow a geometric growth, the
computation is dominated by its final iteration. In particular, we may
use the complexity bound given by Theorem~\ref{thm:master_approx}.

Note that when we increase the accuracy in
Algorithm~\ref{alg:AdaptLLL}, we also reset the precision to its
minimal value. This is a matter of preference that doesn't affect the
asymptotic complexity. In practice, it seems to be preferable.

It is important to note that we do need to precompute the eigenvalues
of the Gram matrix, since  Algorithm~\ref{alg:AdaptLLL} automatically
detects the needed accuracy.

\begin{algorithm}
  \BlankLine
  \Parameter{ $\delta \in (1/4,1), \eta \in (1/2, \sqrt{\delta})$,
    $\ell_0\in\NN$ initial precision
of the algorithm for floating-point representation, $n_0$ initial accuracy for
representing the scalar product, $g>1$ geometric growth factor.}
  \KwIn{$\gamma$ a basis of a lattice $(\Gamma, \inner{\cdot}{\cdot})$, and
  $\mathcal{O}_\gamma(n)$ an oracle that compute the integral representation of
the inner product $\inner{\cdot}{\cdot}$ at accuracy $n$.}
  \KwIn{A generating family represented by $L$ in $\gamma$ of a sublattice
  $\Lat\subset\Gamma$.}
  \KwResult{A $(\delta,\eta)$ \LLL-reduced basis of $\Lat$ represented as
  $L'\in\ZZ^{\rk(\Lat)\times \rk(\Lat)}$.}
  \BlankLine
  \tcp{Set initial values for accuracy and precision}
  \tcp{T$(d,\delta,\eta)$ is the theoretical bound given after Theorem~\ref{thm:master_precision}}
  $\ell \gets \ell_0$\; $n \gets n_0$\;
  $G \gets \mathcal{O}_\gamma(n)$\;
  succeed $\gets$ \False\;
  \Repeat{succeed = \True}{
    retcode $\gets$ \TLL$(G, L)$\;
    \lIf{retcode$=$ErrorNonPosDefinite}{\Return ErrorNonPosDefinite}

    \lIf{retcode$=$\textsc{OK}}{succeed $\gets$ \True}
      \ElseIf{retcode$=$ErrorPrecision}{
        $\ell' \gets \ell$\;
        $\ell \gets \min\left(\lceil g\,\ell\rceil,T(d,\delta,\eta), n\right)$\;
        \lIf{$\ell'=\ell$}{retcode $\gets$ ErrorAccuracy}
      }
      \If{retcode$=$ErrorAccuracy} {
        $\ell \gets \ell_0$\;
        $n \gets \lceil g\,n\rceil$\;
        $G \gets \mathcal{O}_\gamma(n)$\;
      }
    }
    \Return $L$
    \caption{The \ALLL algorithm.}
    \label{alg:AdaptLLL}
\end{algorithm}

\subsection{Possible generalizations}
The adaptative strategy we describe for \LLL~can be generalized to
other lattice reduction algorithm. In particular, enumeration
algorithms are possible within our framework, which allows the
implementation of the BKZ algorithm of~\cite{Schnorr87}.

It would be interesting to study a generalization to sieving
techniques to adapt them to approximate lattices.

\section{Application to Algebraic Number Theory}
We now present a direct application of our lattice reduction strategy
in algorithmic number theory. Namely, we consider some interesting
lattices sitting inside number fields: {\it ideal lattices.}

\subsection{Number fields, integers and ideal lattices}
\subsubsection*{Number fields}
A number field $\KK$ is a finite-dimensional algebraic extension of
$\QQ$. It can be described as:
$$\KK \cong \faktor{\QQ[X]}{(P)}=\QQ(\alpha),$$
where $P$ is a monic irreducible polynomial of degree $d$ in $\ZZ[X]$
and where $\alpha$ denotes the image of $X$ in the quotient.

Let $\left(\alpha_1,\dotsc,\alpha_d\right) \in \CC^d$ denote the
distinct complex roots of $P$. Then, there are $d$ distinct ring-embeddings
of $\KK$ in $\CC$. We define the $i$-th embedding $\sigma_i:\KK \to
\CC$ as the field homomorphism sending $\alpha$ to $\alpha_i$.

It is classical to distinguish embeddings induced by real roots,
a.k.a., {\it real embeddings} from embeddings coming from
(pairs of conjugate) complex roots, called {\it complex embeddings}.
Those arising from complex roots called complex embeddings.

Assume that $P$ has $r_1$ real roots and $r_2$ pairs of conjugate
complex roots, with $d=r_1+2r_2$. Since the embeddings corresponding
to conjugate roots are related by conjugation on $\CC$, we can either
keep a single complex root in each pair or replace each pair by the
real and imaginary part of the chosen root. This leads to the
\emph{Archimedean embedding} $\sigma$ defined as:
\[
\begin{array}{cccl}
  \sigma:& \KK & \longrightarrow & \RR^d\\
  &x & \longmapsto & \left(\sigma_1(x),\ldots,\sigma_{r_1}(x),
  \sqrt{2}\mathfrak{R}(\sigma_{r_1+1}(x)), 
  \sqrt{2}\mathfrak{I}(\sigma_{r_1+1}(x)), \ldots
  \right)^T
\end{array}
\]

This embedding allows us to define a real symmetric bilinear form on $\KK$:
 \[\inner{\vect{a}}{b}_\sigma=\sigma(a)\cdot\sigma(b)=\sum_{i=1}^{d}
 \sigma_i(\vect{a})\overline{\sigma_i(b)}.\]
The second equality explains the presence of the normalization factors
$\sqrt{2}$ in the definition of $\sigma$. Note that the form is
positive definite, thus endowing $\KK$ with an Euclidean structure.

\subsubsection*{Integers}
Any element $\gamma$ of $\KK$ has a minimal polynomial, defined as
the unique monic polynomial of least degree among all polynomials of
$\QQ[X]$ vanishing at $\gamma$.  The algebraic number $\gamma$ is said
to be \emph{integral} if its minimal polynomial lies in $\ZZ[X]$. The
set of all integers in $\KK$ forms a ring, called the {\it ring of
integers} of $\KK$ and denoted $\order_\KK$. It is also a free
$\ZZ$-module of rank $d$. A basis $(w_1,\ldots, w_d)$ of $\order_\KK$
(as a $\ZZ$-module) is called an integral basis of $\KK$.

As a consequence, using the bilinear form
$\inner{\cdot}{\cdot}_{\sigma}$, we can view $\order_\KK$ as a
lattice.

\subsubsection*{Ideals}
An ideal of $\order_\KK$ is defined as an $\order_\KK$-submodule of
$\order_\KK$. In particular, it is a $\ZZ$-submodule of rank~$d$.
Every ideal $I$ can be described by a two-element
representation, i.e. expressed as $I=\alpha \order_\KK
+ \beta \order_\KK,$ with $\alpha$ and $\beta$ in
$\order_\KK$. Alternatively, every ideal can also be described by a
$\ZZ$-basis formed of $d$ elements.

\subsection{Lattice reduction for ideals}
With the above notations, we can directly use our lattice reduction
algorithm to reduce an ideal lattice. More precisely, given an
integral basis $(w_1,\ldots, w_d)$ and a two-element representation of
$I$ by $\alpha$ and $\beta$, we proceed as follows:
\begin{enumerate}
  \item Define the Gram matrix $\Gram_{w}$ with entries
    $\inner{w_i}{w_j}_{\sigma}$. It can be computed to any desired
    precision from approximations of the roots of $P$. The roots
    themselves can be computed, using, for example, the
    Gourdon-Sch\"onhage algorithm~\cite{Gourdon96}.
  \item Let $L$ be the matrix formed of the (integral) coordinates of
    $(\alpha w_1, \ldots, \alpha w_d)$ and $(\beta w_1, \ldots, \beta
    w_d)$ in the basis  $(w_1,\ldots, w_d).$
  \item Directly apply Algorithm~\ref{alg:AdaptLLL} to $(\Gram_w,L).$
\end{enumerate}

The same thing can be done, {\it mutantis mudantis}, for an ideal
described by a $\ZZ$-basis.

\subsubsection*{A well-known special case} For some number fields, the
Gram matrix is $\Gram_w$ is integral. In that case, the use of
Algorithm~\ref{alg:AdaptLLL} isn't necessary and one can directly work
with an exact lattice. This is described for the special case of
reducing the full lattice corresponding to the ring of integers
in~\cite[Section 4.2]{Belabas04} for totally real fields. It can be
generalized to CM-fields, since they satisfy the same essential
property of having an integral Gram matrix. The same application is
also discussed in~\cite[Section 4.4.2]{Cohen93}.

\subsubsection*{Non integral case}
For the general case where the Gram matrix is real, \cite{Belabas04}
propose to multiply by $2^e$ and round to the closest integer. It also
gives a bound on the necessary accuracy $e$  as the logarithm of (the
inverse of) the smallest diagonal entry in the Cholesky
decomposition of the Gram matrix. In some sense, this is similar to
our approach. However, without any auxiliary information on this
coefficient, it is proposed to continue {\it increasing $e$
as long as it is deemed unsatisfactory.}

By contrast, termination of our algorithm guarantees that lattice
reduction is completed and that the output basis is \LLL-reduced.

\providecommand{\href}[2]{#2}
\providecommand{\arxiv}[1]{\href{http://arxiv.org/abs/#1}{arXiv:#1}}
\providecommand{\url}[1]{\texttt{#1}}
\providecommand{\urlprefix}{URL }

\medskip

\appendix
\section{Proof of Proposition~\ref{prop:norm_bound}}
We now show the more general statement for a
$(\delta,\eta)$-\LLL~reduced basis $(b_1, \ldots, b_d)$ of $(\Lat,
\inner{\cdot}{\cdot})$. Namely that for any $1\leq k\leq d$ we have:
  \[
    \Vol{b_1,\ldots,b_k}\leq \left(\delta-\eta^2\right)^{-\frac{(d-k)k}{4}}
    \Vol{\Lat}^{\frac{k}{d}}.
    \]

\begin{proof}
  Using the Lovász condition at index $1\leq i<d$, we write:
  \[
    \delta \|\pi_i(b_i)\|^2 \leq \|\pi_i(b_{i+1}) \|^2 = \|\pi_{i+1}(b_{i+1}) \|^2 +
    \mu_{i,i+1}^2\|\pi_i(b_i)\|^2
  \]
  Thanks to the size-reduction condition, this implies:
  \begin{equation}
    \label{eq:single_majoration_LLL}
    \forall i\in\{1,\ldots,d-1\}, \quad \|\pi_i(b_i)\|^2 \leq
    \left(\delta-\eta^2\right)^{-1}  \|\pi_{i+1}(b_{i+1})\|^2.
  \end{equation}

  Let $K$ denote $\left(\delta-\eta^2\right)^{-1/2}$
  and $\ell_i$ be the norm of the vector $\pi_i(b_i)$. Then,
  Equation~\eqref{eq:single_majoration_LLL} becomes:
  \[
    \forall i\in\{1,\ldots,d-1\}, \quad \ell_i \leq K \ell_{i+1}.
  \]

  Recall that $\Vol{b_1, \ldots, b_k} = \prod_{i=1}^k \ell_i$. This
  implies that for any $j>k$:
  \[
  \Vol{b_1, \ldots, b_k} \leq \prod_{i=1}^k K^{j-i}\ell_j=K^{k(2j-k-1)/2}\cdot\ell_j^k.
  \]
  Thus:
  \[
    \begin{aligned}
      \Vol{b_1, \ldots, b_k}^{d} = \left(\prod_{i=1}^k \ell_i\right)^d
      &\leq \left(\prod_{i=1}^k \ell_i\right)^k
      \, \prod_{j=k+1}^d K^{k(2j-k-1)/2}\cdot\ell_j^k\\
      &\leq \left(\prod_{i=1}^d \ell_i\right)^k
      \, K^{\sum_{j=k+1}^d{k(2j-k-1)/2}}\\
      &\leq \Vol{\Lat}^k K^{\frac{d(d-k)k}{2}}.
    \end{aligned}
  \]
\end{proof}

\end{document}